\newtheorem{theorem}{Theorem}
\newtheorem{claim}{Claim}
\tikzstyle{dot}=[circle,draw=gray!90,fill=gray!20,thick,inner
\date{}
\begin{document}
\title{Transmission of non-linear binary input functions over a CDMA System}
\vspace{-3.5cm}
\author{\IEEEauthorblockN{Elaheh Mohammadi}
\IEEEauthorblockA{Dept. of EE\\
Amirkabir University of Technology\\
Tehran, Iran}
\and
\IEEEauthorblockN{Amin Gohari}
\IEEEauthorblockA{Dept. of EE\\
Sharif University of Technology\\
Tehran, Iran}
\and
\IEEEauthorblockN{Hassan Aghaeinia}
\IEEEauthorblockA{Dept. of EE\\
Amirkabir University of Technology\\
Tehran, Iran}
}

\maketitle

\begin{abstract}We study the problem of transmission of binary input non-linear functions over a network of mobiles based on CDMA. Motivation for this study comes from the application of using cheap measurement devices installed on personal cell-phones to monitor environmental parameters such as air pollution, temperature and noise level. Our model resembles the MAC model of Nazer and Gastpar except that the encoders are restricted to be CDMA encoders. Unlike the work of Nazer and Gastpar whose main attention is transmission of linear functions, we deal with non-linear functions with binary inputs. A main contribution of this paper is a lower bound on the computational capacity for this problem. While in the traditional CDMA system the signature matrix of the CDMA system preferably has independent rows, in our setup the signature matrix of the CDMA system is viewed as the parity check matrix of a linear code, reflecting our treatment of the interference.
\end{abstract}

\section{Introduction}
The problem of decoding functions of sources rather than the sources themselves in a Multiple Access Channel (MAC) has been studied in several works (see for instance \cite{NazerGastpar} and its follow up works, also \cite{KornerMarton} and \cite{Vishwanath}). It has been shown that separation is not always optimal in such scenarios, even when the sources are independent \cite{NazerGastpar}. The intuitive reason for this is that the interference caused by other users could be exploited to compute a given function over the air, if the pattern of the interference matches the functions we want to compute.

All of the previous models for transmission of functions over MAC (that we have seen) do not impose any restrictions on the structure of the encoders, except perhaps on the input power. However some promising emerging applications may violate this assumption. We were motivated by one such application to impose a CDMA system as being part of the encoders.

The application is monitoring the exposure of humans to environmental parameters such as air pollution, temperature, etc. since the authors are living in one of the world's most polluted cities. The traditional way of monitoring is to install measurement devices distributed over a given area. Suggestion has been made to install low cost measurement devices on personal cell-phones, e.g. see \cite{Economist}. Although the focus of this paper is not the application, but we would like to mention a motivation for this application since it may be new (we have not seen it in the literature). Suppose we are interested in the collective exposure of residents to air pollution (not just personal exposures or the general pollution maps). To find the answer, it is not sufficient to have a pollution map, but also the population density at the polluted areas at various times in a day. Let us take the average of the measurements by the mobile sensors that are being carried by the residents as they move in the city. There will be just more cell-phones in populated areas and we can simultaneously take into consideration both the pollution and the population density.

Note that when the mobile system in the application is employing the CDMA system, it is preferable to use the same architecture to transmit functions of measurements by the cell-phones. Thus, we are considering the problem of function transmission over a network of mobiles based on CDMA.

\begin{figure}
\centering
\includegraphics[width=90mm]{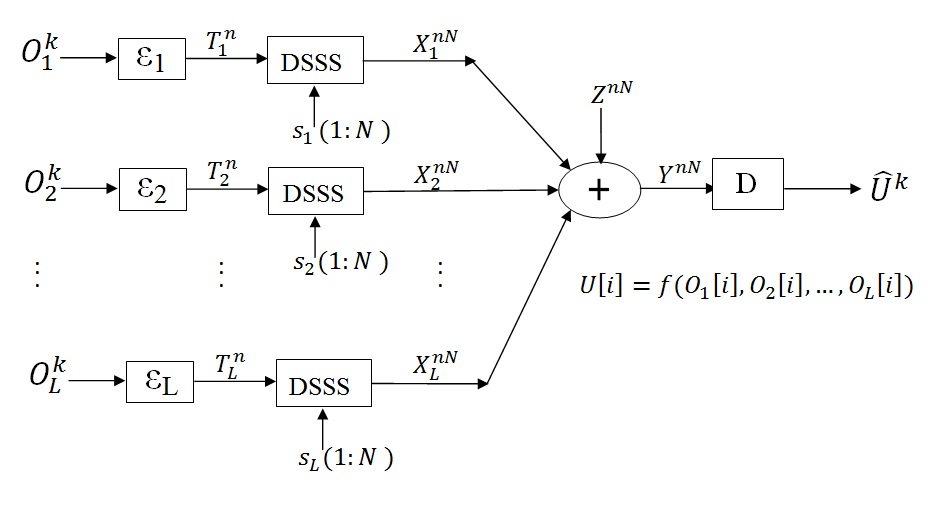}
\vspace{-1cm}
\caption{The communication model. The observations $O_1, \cdots, O_L$ are passed through encoders (resulting in $T_i$'s), multiplied by signatures (the $s_i$'s) of a CDMA systems and transmitted over the air (a Gaussian MAC channel). The receiver gets sum of the transmitted signals plus noise (i.e. $Y$), and wants to recover a function $f(O_1, \cdots, O_L)$ of the observations (denoted by $U$). In the general model we also allow for $b$ distinct functions instead of just one.}\label{fig:Setup}
\vspace{-0.5cm}
\end{figure}

We argue that finding the \emph{optimal} scheme for transmission of functions can be studied from two different criteria: i.e. maximizing privacy and minimizing transmission rates. Transmission of the whole data (rather than a function of it) may not only be bad in terms of transmission rates, but also it may compromise the privacy of cell-phone users. Therefore we can either maximize privacy or minimize transmission rates over all codes that allow reliable function computation. For the Korner-Marton problem \cite{KornerMarton} the two criteria yield exactly the same answer\footnote{This is because the equivocation rate for each user is bounded from above by the entropy of their source conditioned on the function to be computed, which is exactly what the Korner-Marton scheme achieves.}; but we believe in general they may be different. Nonetheless in this study we follow the traditional approach of minimizing the transmission rates.

Our model is shown in Fig. \ref{fig:Setup} which is similar to the one considered in \cite{NazerGastpar}, except for addition of the signature matrices. It is discussed rigorously in section \ref{Section:FormalDefinition}. But before that in section \ref{Section:Motivation}, we intuitively discuss our interpretation of the signature matrix of the CDMA system as the parity check matrix of a linear code, demonstrating our treatment of interference. \vspace{-0.01cm} Section \ref{Section:NazerGastpar} discusses the lower bound of \cite{NazerGastpar} applied to transmission of non-linear functions in our setting. Section \ref{Section:MainResults} contains our main result, providing a lower bound on the computational capacity for our problem. The bound is expressed in terms of the answer to another problem that we introduce, i.e. the problem of Slepian Wolf with the same compression matrices (discussed in section \ref{Section:Extension}). We believe the latter problem can itself be of independent interest. We will not be discussing any upper bounds, but one can derive an upper bound using the ideas in \cite{NazerGastpar} by merging all the transmitters into one node (the same technique used in some versions of the cut-set bound).

\section{Signature Matrix as a Parity Check Matrix }
\label{Section:Motivation}
In this section we discuss our use of the signature matrix as a parity check matrix at a very simplistic level to convey the basic intuitions. Let us assume that we have only three cell-phones. These cell-phones are observing binary random variable $O_1$, $O_2$ and $O_3$ respectively. The goal of the base station is to recover a boolean function of $O_1$, $O_2$ and $O_3$. Let us assume that the cell-phones directly insert their uncoded bits into a CDMA system with a given signature matrix. For instance, if the signature matrix is
\begin{eqnarray}\label{eqn:matrixH}\left(
  \begin{array}{ccc}
    1 & 1 & 0 \\
    1 & 0 & 1\\
  \end{array}
\right)\end{eqnarray}
the signature of the first, second and the third cell-phones would be the vectors $\textbf{s}_1=(1,1)^t$, $\textbf{s}_2=(1,0)^t$ and $\textbf{s}_3=(0,1)^t$ respectively. Assuming tight power control, the receiver gets the vector $O_1\textbf{s}_1+O_2\textbf{s}_2+O_3\textbf{s}_3$ plus some noise. Let us assume that there is no noise for now. In this case, the receiver gets two symbols, the first of which is $Y_1=O_1+O_2$ and the second one is $Y_2=O_1+O_3$. Note that the summation here is real addition in $\mathbb{R}$, and not in the field $\mathbb{F}_2$. Because $O_i$ takes values in $\{0,1\}$, $Y_1$ and $Y_2$ will be numbers in the set $\{0,1,2\}$. If $Y_1=0$, we can conclude that $O_1=O_2=0$. The value of $O_2$ would then specify $O_3$. Similarly, when $(Y_1, Y_2)=(2,1)$ we can figure out $O_1$, $O_2$ and $O_3$ exactly. However, when $(Y_1,Y_2)=(1,1)$, there are two possibilities: $(O_1, O_2, O_3)$ can be $(1,0,0)$ or $(0,1,1)$. If one were to compute a function $f(O_1, O_2, O_3)$ at the receiver, the necessary and sufficient condition for doing so would be that $f(1,0,0)=f(0,1,1)$. Note that this implies that among $2^{2^3}$ plausible boolean functions, half of them are computable with the given signature matrix. Now, observe that if we interpret the signature matrix given in equation \eqref{eqn:matrixH} as a parity check matrix, the codewords would be the triples $(0,0,0)$ and $(1,1,1)$. This implies that the triples $(1,0,0)$ and $(0,1,1)$ form a coset for this codebook, because their mod-$2$ sum is a codeword. The constraint $f(1, 0, 0) = f(0, 1, 1)$ says that $f$ has to be constant over this coset.

The above simple example can be extended to more general setups. It turns out that if we interpret the signature matrix as a parity check matrix, and take a function $f$ that is equal to a constant over any coset of the parity check matrix\footnote{Of course the constant may depend on the coset.}, we will be able to perfectly recover $f$ when the channel is noiseless. When the channel is noisy, one can overcome noise via pre-coding; this is explained formally in Sec. \ref{Section:MainResults}.

\section{The Communication Model}
\label{Section:FormalDefinition}
\begin{table}
\caption{Notation} \centering
\begin{tabular}{|c|c|}
    \hline
Variable &  Description  \\
\hline
    \footnotesize{$O_i (1\leq i\leq L)$}& \footnotesize{Observations by the nodes.} \\
\hline
\footnotesize{$U_i (1\leq i\leq b)$} & \footnotesize{Functions to be computed.}\\
\hline
\footnotesize{$T_i$} & \footnotesize{Output of the $i^{th}$ cell-phone}\\& \footnotesize{to be multiplied by the signature.}\\
\hline
\footnotesize{$\textbf{s}_i$} & \footnotesize{The signature of the $i^{th}$ cell-phone.}\\
\footnotesize{$N$} & \footnotesize{Length of the signatures.}\\
\hline
\footnotesize{$V_i (1\leq i\leq r)$} & \footnotesize{In most places $\sum_{j=1:L}h_i[j]O_j$ (modulo $2$).}\\
\hline
    \end{tabular}
\vspace{-0.45cm}\label{table}
\end{table}

In this section we define the communication model for our problem. Some of the notation we encounter as we go along the paper are summarized in Table \ref{table}. For a r.v. $T$ we use $T^n$ as a shorthand for the sequence $(T[1], T[2], \dots, T[n])$.

Assume that there are $L$ cell-phones. Let us denote the observation of the $i$-th cell-phone by r.v. $O_i$ taking values in the discrete set $\mathcal{O}_i$. R.Vs $O_1, O_2, \dots, O_L$ are jointly distributed according to a given $p_{O_1, O_2, \dots, O_L}(o_1, o_2, \dots o_L)$. We assume that the $L$ cell-phones are observing \emph{i.i.d.} copies $O_1, O_2, \dots, O_L$. The goal of the cell-phones is to enable the base station to recover i.i.d.\ copies of $b$ functions of the observations which we denote by $U_i$ ($1\leq i\leq b$), $U_i=f_i(O_1, O_2, \dots, O_L)$.

\emph{Definition of a code:} (see Fig. \ref{fig:Setup}) A code consists of
\begin{enumerate}
  \item An encoder for each cell-phone, mapping $O_i^k$ ($k$ i.i.d.\ copies of $O_i$) into a sequence of $n$ bits (denoted by $T_i^n$),
  \item A binary signature $\textbf{s}_i$ of length $N$ for the $i$-th cell-phone,
  \item One decoder at the receiver.
\end{enumerate}
\par
The actual signals transmitted over the air are $X_i^{nN}$ for $i\in[1:L]$, that are formed by multiplying each bit of encoder's output $T_i^n$ into the signature $\textbf{s}_i$. Note that the length of $T_i^n$ is $n$, and the length of $\textbf{s}_i$ is $N$. Since each bit of $T_i^n$ is multiplied by the whole sequence $\textbf{s}_i$ in the CDMA system, the output will be a binary string of length $nN$, denoted by $X_i^{nN}$.

Assuming a CDMA power control, the transmitted $X_i^{nN}$ goes through a Gaussian MAC channel, and the receiver gets $Y^{nN}$ where $Y[i]=\sum_{j=1}^LX_j[i]+Z[i], i\in[1:nN]$ for a Gaussian noise sequence $Z^{nN}$. The receiver (base station) takes the output sequences $Y^{nN}$ and passes it through a decoder to reconstruct the $b$ functions $\widehat{U}_i^k$ ($i\in[1:b]$). The probability of error of the code is taken to be the probability that $\widehat{U}_i^k\neq U_i^k$ for some $i\in[1:b]$. The rate of the code is taken to be $R=\frac{k}{nN}$. To impose a power constraint on the users, we assume that that $T_j[i]$ is taking values in $\{0,1\}$,\footnote{There will be a power gain by subtracting the mean of $T_j[i]$ from it, because that would reduce the variance of the transmitted signal. This would convert $T_j[i]$ into a $\pm$ bipolar signal. However, use of $0/1$ signals makes the exposition of the paper more appealing. Furthermore random variables $T_j[i]$ that we will end up using will have a uniform distribution over $\{0,1\}$ and can be adjusted to bipolar signals at the very last stage to decrease the average power consumption, while leaving the arguments unchanged.} and the variance of $Z[i]$ is $\sigma^2$.
%
%

\emph{Computational Capacity:} Given a signature length $N$, A communication rate $R_N$ is said to be achievable if there is a sequence of codes, $\mathcal{C}_n$ for $n\in \mathbb{N}$, all having signatures of length $N$, such that $\lim_{n\rightarrow\infty}P_e(\mathcal{C}_n)=0$ and $\lim_{n\rightarrow\infty}R(\mathcal{C}_n)=R_N$ where $P_e(\mathcal{C}_n)$ and $R(\mathcal{C}_n)$ are the probability of error and rate of the code respectively. The computational capacity for a signature length $N$, $C_N$, is taken to be the supremum of the set of achievable rates for that signature length $N$.

\section{Slepian-Wolf With The Same Compression Matrices}
\label{Section:Extension}
Before discussing our main result, we need to introduce the problem of Slepian-Wolf with the same compression matrices. We believe this problem can itself be of independent interest.

We first begin with the problem in a special case. Suppose we have three correlated binary sources $V_1$, $V_2$ and $V_3$ jointly distributed according to $p(v_1,v_2,v_3)$. I.i.d. copies of these three sources are observed by three parties, who want to communicate these i.i.d.\ copies to a fourth party, Alice, using noiseless links of rates $R_1$, $R_2$, and $R_3$. We are interested in the case of $R_1=R_2=R_3=R$. The minimum possible value of $R$ will be the minimum value of $R$ such that $(R,R,R)$ is in the Slepian-Wolf region. We call this $R_{SW}$. We know that for any $R>R_{SW}$ we can achieve the rate triple $(R,R,R)$ using linear codes: there are matrices $B_1$, $B_2$ and $B_3$ (of size $nR\times n$) where the three parties can use and send $B_1V_1^n$, $B_2V_2^n$ and $B_3V_3^n$ where $V_i^n$ is a column vector consisting of $V_i[j]$ for $j\in[1:n]$. The multiplication is in the field $\mathbb{F}_2$.

 Now, what if we are interested to find a \emph{single matrix} $B$, such that having $BV_1^n$, $BV_2^n$ and $BV_3^n$ we can recover ($V_1^n$, $V_2^n$ and $V_3^n$)? The three parties are sending at rates $R_1=R_2=R_2=R$ to Alice using the \emph{same} compression matrix $B$. We denote the minimum value of $R$ in this case by $R^{s.c.}_{SW}$. Clearly $R^{s.c.}_{SW}$ is larger than or equal to $R_{SW}$ (defined in the previous paragraph), because more restrictions are imposed on the definition of $R^{s.c.}_{SW}$. But is $R^{s.c.}_{SW}$ always \emph{equal} to $R_{SW}$? We show in Claim \ref{thm:2} that this is not true. The definition of $R^{s.c.}_{SW}$ can be extended to more than three parties in the natural way.

Use of the same matrix $B$ to compress correlated data (or Slepian-Wolf with the same compression matrices) arises naturally in our problem. It is also related to the ``syndrome technique" whereby a single code based is constructed for distributed compression (see for instance \cite{PradhanRamchandran}\cite{ZarasoaRoumyGuillemot}\cite{GehrigDragotti}). And after all, it is interesting to find the best compression rate one can achieve if a universal compression code is used by all nodes in a distributed source coding problem.

We do not know the exact value of $R^{s.c.}_{SW}$, but prove a few results about it.
\begin{claim}\label{thm:1} Let us assume we have only two binary r.v.'s $V_1$ and $V_2$. Let $K=V_1+V_2~(\mbox{mod }2)$. Then $R^{s.c.}_{SW}$ for transmission of these two r.v's is less than or equal to $\max(H(K), H(V_1|K))$.
\end{claim}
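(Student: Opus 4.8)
The plan is to realize both sending parties with a \emph{single} uniformly random binary matrix $B$ of size $m\times n$, where $m=\lceil nR\rceil$ for an arbitrary $R>\max\bigl(H(K),H(V_1|K)\bigr)$, and to argue that such a $B$ simultaneously serves two decoding tasks. The key observation is the $\mathbb F_2$-linearity of the encoders: the decoder receives $BV_1^n$ and $BV_2^n$, and may therefore form their mod-$2$ sum
\[
BV_1^n\oplus BV_2^n=B\bigl(V_1^n\oplus V_2^n\bigr)=BK^n ,
\]
so in effect the decoder holds the pair of syndromes $\bigl(BK^n,\,BV_1^n\bigr)$.

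Decoding is carried out in two stages. \emph{Stage~1 (recover $K^n$ from $BK^n$):} the decoder outputs the unique $k^n$ in the $K$-typical set whose $B$-syndrome equals $BK^n$. Because $K^n$ is i.i.d.\ and $m/n>H(K)$, the standard counting bound for linear source codes applies --- there are at most $2^{n(H(K)+\delta)}$ typical candidates, each of which, being distinct from the true $K^n$, collides with it in syndrome with probability $2^{-m}$ over the choice of $B$ --- so the stage-$1$ error probability $E_1(B)$ satisfies $\mathbb E_B\bigl[E_1(B)\bigr]\to 0$. \emph{Stage~2 (recover $V_1^n$ from $BV_1^n$ using the decoded $\widehat K^n$ as side information):} conditioned on Stage~1 succeeding (so $\widehat K^n=K^n$), the decoder outputs the unique $v_1^n$ that is jointly typical with $\widehat K^n$ and satisfies $Bv_1^n=BV_1^n$. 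Since $(V_1^n,K^n)$ is i.i.d.\ and $m/n>H(V_1|K)$, the analogous count gives at most $2^{n(H(V_1|K)+\delta)}$ candidates jointly typical with a given $k^n$, whence the conditional stage-$2$ error $E_2(B)$ also has $\mathbb E_B\bigl[E_2(B)\bigr]\to 0$. Finally the decoder sets $\widehat V_2^n=\widehat V_1^n\oplus\widehat K^n$; since $H(V_2|K)=H(V_1|K)$, one could equally well decode $V_2$ first.

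The total error probability of this scheme is at most $E_1(B)+E_2(B)$, so $\mathbb E_B[\text{error}]\to 0$, and hence for every sufficiently large $n$ there is a fixed matrix $B_n$ of size $\lceil nR\rceil\times n$ whose error probability tends to zero. Therefore every $R>\max\bigl(H(K),H(V_1|K)\bigr)$ is achievable with a single compression matrix, which is exactly the asserted bound on $R^{s.c.}_{SW}$.

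The only delicate point is that one matrix must simultaneously be a good syndrome-former for the marginal source $K$ (at rate $H(K)$) and a good syndrome-former for $V_1$ with decoder side information $K$ (at rate $H(V_1|K)$), and a priori these two ``good'' codebooks need not be related. The random-coding/averaging argument sidesteps this: the \emph{same} ensemble of matrices is provably good for each task in expectation, so a union bound over the two bad events produces one matrix good for both, with no structural compatibility needed. The two underlying facts --- that random linear codes attain $H(K)$ for plain compression and $H(V_1|K)$ for compression with decoder side information --- are classical (essentially the ``syndrome technique'' mentioned above), so I would only sketch the counting and leave the routine typicality bookkeeping implicit.
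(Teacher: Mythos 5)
Your proposal is correct and follows essentially the same route as the paper's own (sketched) proof: a single random binary matrix $B$, exploiting linearity to form $BK^n$, decoding $K^n$ first at rate above $H(K)$, then $V_1^n$ with $K^n$ as side information at rate above $H(V_1|K)$, and recovering $V_2^n$ by addition. Your added typicality counting and the union-bound/averaging step to extract one fixed matrix good for both tasks simply make explicit what the paper leaves implicit.
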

\begin{proof} Let $R=\max(H(K), H(V_1|K))$. Here is the sketch of the proof: let us generate the coordinates of the common compression matrix $B$ (of size $nR\times n$) uniformly and randomly from $\{0,1\}$. Then having $BV_1^n$ and $BV_2^n$, we can add them modulo two to get $B(V_1^n+V_2^n)=BK^n$. Since $R\geq H(K)$, $B$ is a good source code for recovering $K^n$ with high probability. Hence we can decode $K^n$ first. The Slepian-Wolf rate for recovering $V_1^n$ with $K^n$ serving as a side information is $H(V_1|K)$. Since $R\geq H(V_1|K)$, $B$ is a good SW code with high probability. Therefore we can find $V_1^n$. Having $V_1^n$ and $K^n$, we can also recover $V_2^n$.
\end{proof}

\begin{claim}\label{thm:2} There exists $V_1, \cdots, V_r$ such that the value of $R^{s.c.}_{SW}$ is strictly larger than $R_{SW}$. Next, for any $p(v_1, \cdots, v_r)$,  $R^{s.c.}_{SW}$ is less than or equal to $\min(rR_{SW}, \max_iH(V_i))$.
%
\end{claim}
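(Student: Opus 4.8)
The plan is to establish the two upper bounds first (these are the ``achievability'' side, obtained by random coding and by stacking matrices) and then to exhibit one source tuple that realizes the strict inequality.

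\textbf{The two upper bounds.} For $R^{s.c.}_{SW}\le\max_i H(V_i)$, fix any $R>\max_i H(V_i)$ and draw the entries of an $nR\times n$ matrix $B$ i.i.d.\ uniform over $\mathbb{F}_2$. For each fixed $i$, since $R>H(V_i)$, the map $v^n\mapsto Bv^n$ is with probability tending to $1$ a good source code for $V_i^n$: the standard syndrome–decoding estimate bounds the probability that two distinct $\epsilon$-typical sequences collide under $B$ by $2^{n(H(V_i)+\epsilon)}2^{-nR}\to 0$. A union bound over $i\in[1:r]$ gives a single $B$ that simultaneously recovers every $V_i^n$ from $BV_i^n$, hence recovers the whole tuple. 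For $R^{s.c.}_{SW}\le rR_{SW}$, fix $R>R_{SW}$ and use the fact recalled above, that symmetric–rate Slepian--Wolf is achievable with linear codes: there are matrices $B_1,\dots,B_r$, each $nR\times n$, from whose outputs $(B_1V_1^n,\dots,B_rV_r^n)$ the tuple $(V_1^n,\dots,V_r^n)$ is recoverable w.h.p. Take $B$ to be the vertical concatenation of $B_1,\dots,B_r$, of size $rnR\times n$; then $BV_i^n$ contains $B_iV_i^n$ as a subvector, so $(BV_1^n,\dots,BV_r^n)$ determines $(B_1V_1^n,\dots,B_rV_r^n)$ and the original decoder applies. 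The per-source rate is $rR$, and letting $R\downarrow R_{SW}$ gives the bound. Together these yield $R^{s.c.}_{SW}\le\min(rR_{SW},\max_iH(V_i))$.

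\textbf{The strict-gap example.} Take $r=3$, let $V_1,V_2$ be i.i.d.\ uniform on $\{0,1\}$, and set $V_3=V_1\oplus V_2$, so $(V_1,V_2,V_3)$ is uniform over the four even-weight triples, each $V_i$ is uniform, and $H(V_1,V_2,V_3)=2$. The Slepian--Wolf inequalities read $R_1\ge H(V_1|V_2,V_3)=0$, $R_i+R_j\ge H(V_i,V_j|V_k)=1$, and $R_1+R_2+R_3\ge 2$; at the symmetric point the last is binding, so $R_{SW}=2/3$. For the common-matrix problem, observe that for \emph{every} $B$ of size $nR\times n$ we have $V_1^n+V_2^n+V_3^n=0$ over $\mathbb{F}_2$, hence $BV_3^n=BV_1^n+BV_2^n$: the third syndrome is a deterministic function of the other two. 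Consequently the decoder's input satisfies $H(BV_1^n,BV_2^n,BV_3^n)=H(BV_1^n,BV_2^n)\le 2nR$. Since $(V_1^n,V_2^n,V_3^n)$ has entropy $2n$ and must be recovered with vanishing error probability, Fano's inequality together with the data-processing inequality gives $2n\le 2nR+o(n)$, i.e.\ $R\ge 1-o(1)$, so $R^{s.c.}_{SW}\ge 1>2/3=R_{SW}$. (The first upper bound then forces $R^{s.c.}_{SW}=1$ for this example.)

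\textbf{Where the work is.} The random-coding step and the stacking construction are routine. The point that requires the right observation is the converse for the example: one must notice that forcing a \emph{common} compression matrix converts a linear dependence among the sources ($V_1\oplus V_2\oplus V_3=0$) into an unavoidable loss, because one of the $r$ compressed outputs collapses to a function of the others, whereas with separate matrices $B_1,B_2,B_3$ no such collapse occurs. Beyond establishing that the gap is nonzero, determining its exact size, or characterizing precisely which dependence structures force $R^{s.c.}_{SW}>R_{SW}$, is the part that does not follow from generic source-coding facts and is left open.
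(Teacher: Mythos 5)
Your proof is correct. The two upper bounds ($R^{s.c.}_{SW}\le \max_i H(V_i)$ via a single random matrix plus a union bound over $i$, and $R^{s.c.}_{SW}\le rR_{SW}$ via vertical stacking of $B_1,\dots,B_r$) are exactly the paper's arguments. Where you genuinely diverge is the strict-gap example. The paper takes the degenerate case $V_1=V_2=\cdots=V_r$: there $R_{SW}=H(V_1)/r$ while $R^{s.c.}_{SW}=H(V_1)$, which not only shows strictness but also shows the factor $r$ in the bound $R^{s.c.}_{SW}\le rR_{SW}$ cannot be improved. Your example ($V_1,V_2$ i.i.d.\ uniform, $V_3=V_1\oplus V_2$) requires a real converse step --- noticing that $BV_3^n=BV_1^n\oplus BV_2^n$ collapses one output and then invoking Fano to get $R\ge 1>2/3$ --- and your computation checks out ($R_{SW}=2/3$ from the sum-rate constraint, $R^{s.c.}_{SW}=1$ by matching bounds). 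What your version buys is a nondegenerate illustration: the gap persists even when no two sources are identical, and the mechanism is cleanly identified as a linear dependence among the sources being inherited by any common compressor. What it gives up is the tightness statement: your gap ratio is $3/2$ rather than $r$, so on its own it would not support the paper's remark that the ratio $R^{s.c.}_{SW}/R_{SW}$ can be as large as $r$. Both routes are valid proofs of the claim as stated.
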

\begin{proof} Let $V_1=V_2=\cdots=V_r$. Then $R_{SW}$, i.e. the minimum value of $R$ such that $(R,R,R)$ is in the Slepian-Wolf region, is equal to $\frac{H(V_1)}{r}$. However, $R^{s.c.}_{SW}$ is equal to $H(V_1)$.

To show that $R^{s.c.}_{SW}\leq rR_{SW}$ always holds, we start from an arbitrary code for $R_{SW}$, and construct another code for $R^{s.c.}_{SW}$. Take an arbitrary code with compression matrices $B_1, B_2, ..., B_r$ all of size $nR\times n$. Let $B$ to be equal to $[B_1^t~B_2^t~\cdot~B_r^t]^t$ where $\small{t}$ is the transpose operation. 
 One can verify that matrix $B$ is a valid common compression matrix, and is achieving the rate $rR$ for the problem of $R^{s.c.}_{SW}$. Thus $R^{s.c.}_{SW}\leq rR_{SW}$. Note this upper bound on the ratio $\frac{R^{s.c.}_{SW}}{R_{SW}}$ cannot be made smaller than $r$ because of the example given at the beginning of this proof. \par To show the inequality $R^{s.c.}_{SW}\leq \max_iH(V_i)$, observe that a random compression matrix of size $ n[\max_i(H(V_i)+\epsilon]\times n$ allows for recovery of $V_i^n$ from $BV_i^n$ (for all $i\in [1:r]$) with the average probability of error converging to zero. Thus a particular instance should also work.
\end{proof}



\section{Main results}
\label{Section:MainResults}
In this section we state our main results. Proof is given in Section \ref{section:proof}.

Let $f_i(O_1, \dots,O_L)$ ($1\leq i\leq b$) be a set of functions satisfying the property that $$f_i(o_1, o_2, \dots, o_L)=f_i(o'_1, o'_2, \dots, o'_L),~~i\in[1:b],$$ for any two sequences $(o_1, o_2, \dots, o_L)$ and $(o'_1, o'_2, \dots, o'_L)$ belonging to the same coset of some parity check matrix $H$.\footnote{Vectors $(o_1, o_2, \dots, o_L)$ and $(o'_1, o'_2, \dots, o'_L)$ belong to the same coset if $H[o_1, o_2, \dots, o_L]^t=H[o'_1, o'_2, \dots, o'_L]^t$ where the product is in $\mathbb{F}_2$.} Without loss of generality we can assume that $H$ has distinct rows $h_1$, $h_2$, ..., $h_r$. Thus matrix $H$ is of size $r\times L$.
\begin{theorem}\label{thm:3} For any signature length $N>r$, the following rate is achievable
$$R_N=\frac{c}{N\cdot \mathfrak{R}}\geq \frac{c}{N\cdot \max_iH(\bigoplus^{mod~2}_{j=1:L}~h_i[j]O_j)},$$
where $\mathfrak{R}$ is $R^{s.c.}_{SW}$ for the choice of $V_i=\sum_{j=1:L}h_i[j]O_j$ (modulo $2$). The second lower bound comes from applying Claim \ref{thm:2}, and is an explicit lower bound expression.

The number $c$ is the capacity of a channel with input alphabet $\mathcal{W}=\{0, 1\}$ and output alphabet $[-\frac{1}{2},\frac{3}{2}]$ defined as follows: the output is formed by adding $W$ to a Gaussian noise with variance $\frac{\sigma^2}{\lfloor\frac{N}{r}\rfloor}$, and then taking it modulo $2$, meaning that we add an integer multiple of $2$ to it to make it fall into the interval $[-\frac{1}{2},\frac{3}{2})$. Note that because of the symmetry the capacity occurs at a uniform input distribution.
\end{theorem}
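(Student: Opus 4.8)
The plan is to combine two ingredients: the interpretation of the signature matrix as a parity check matrix (so that the over-the-air sum computes the $\mathbb{F}_2$-syndromes $V_i=\bigoplus_j h_i[j]O_j$), and a precoding step that protects each syndrome coordinate against the Gaussian noise. First I would set up the encoders so that each cell-phone, after observing $O_i^k$, feeds $O_i$ into a layer that effectively realizes multiplication by (a submatrix of) a common compression matrix $B$ achieving rate $\mathfrak{R}=R^{s.c.}_{SW}$ for the sources $V_1,\dots,V_r$; the point of the ``same compression matrix'' formulation in Section~\ref{Section:Extension} is precisely that all $L$ users must apply the \emph{same} linear map, since what arrives at the receiver is governed by the fixed signature/parity-check matrix $H$, not by user-specific matrices. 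Concretely, one repeats each compressed bit $\lfloor N/r\rfloor$ times across the $N$ chip positions so that, after the CDMA sum and reduction mod $2$, each of the $r$ syndrome symbols is carried on its own block of $\lfloor N/r\rfloor$ chips.

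Next I would analyze the resulting effective channel seen by a single compressed syndrome bit. Because $T_j[i]\in\{0,1\}$ and the channel adds all active users' contributions plus Gaussian noise, the receiver observes $V_i$ (an integer in $\{0,\dots\}$, but what matters is its parity) perturbed by noise; averaging over the $\lfloor N/r\rfloor$ repeated chips reduces the noise variance to $\sigma^2/\lfloor N/r\rfloor$, and reducing modulo $2$ into $[-\tfrac12,\tfrac32)$ yields exactly the channel $\mathcal{W}=\{0,1\}\to[-\tfrac12,\tfrac32]$ of capacity $c$ described in the statement. I would then invoke a standard channel-coding argument: over the $nN$ chip uses we get $n\lfloor N/r\rfloor$ uses per syndrome coordinate of a memoryless channel of capacity $c$, so reliable recovery of the $\mathbb{F}_2$-linear images $BV_i^n$ (equivalently, of $n\mathfrak{R}$ bits per source after accounting for the rate of $B$) is possible whenever the induced rate is below $c$. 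Matching parameters, $k$ source symbols require $n\ge k/(\text{bits per syndrome block})$ and $N\ge r\cdot(\text{chips per block})$, which after bookkeeping gives $R=k/(nN)$ approaching $c/(N\mathfrak{R})$; the second inequality in the theorem is then immediate from Claim~\ref{thm:2} bounding $\mathfrak{R}\le\max_i H(V_i)$. Finally, once the receiver has recovered $V_1^n,\dots,V_r^n$ — i.e. the syndrome of $(O_1,\dots,O_L)$ with respect to $H$ — it knows the coset of each symbol tuple, and by the defining hypothesis on $f_1,\dots,f_b$ each $f_i$ is constant on cosets, so all $b$ functions are determined; hence $P_e\to0$.

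The main obstacle I anticipate is the careful interleaving of three separate coding layers — the common compression matrix $B$ of rate $\mathfrak{R}$ (which is what the $R^{s.c.}_{SW}$ machinery supplies and is used to handle the \emph{correlation} among the $V_i$'s and to let the receiver peel them apart), the repetition-across-chips layer (which converts the $r\times L$ parity-check structure into $r$ parallel noisy-mod-$2$ channels), and the outer channel code of rate $<c$ on each parallel channel (which handles the Gaussian \emph{noise}) — and verifying that these compose without the linear operations interfering. In particular one must check that applying $B$ before the CDMA sum is consistent: the receiver gets $H$ applied to the $B$-compressed streams, and because both operations are $\mathbb{F}_2$-linear and $B$ acts ``vertically'' (within each user's time axis) while $H$ acts ``horizontally'' (across users), they commute, so the receiver indeed obtains $B$ applied to the syndrome sequences $V_i^n$. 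The remaining steps — the repetition/averaging noise-variance computation, the symmetry argument putting capacity $c$ at the uniform input, and the law-of-large-numbers/typicality estimates for the source and channel codes — are routine. I would organize the write-up as: (1) construction of the code (encoders, signature = parity check matrix $H$ padded to length $N$, repetition, precoding by $B$, outer channel code); (2) decoding (channel decode on each of the $r$ blocks, then $B$-inversion via the $R^{s.c.}_{SW}$ decoder, then evaluate $f_i$ on the recovered coset); (3) error analysis; (4) rate computation and the reduction to the explicit bound via Claim~\ref{thm:2}.
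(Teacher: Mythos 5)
Your construction is essentially the paper's own proof: the signature matrix is $H$ with each row repeated $\lfloor N/r\rfloor$ times (averaging reduces the noise variance to $\sigma^2/\lfloor N/r\rfloor$ and mod-$2$ reduction yields the capacity-$c$ channel), each encoder applies a common linear source code $B$ of rate $\mathfrak{R}$ followed by a linear channel code of rate just below $c$, so that by $\mathbb{F}_2$-linearity the receiver obtains the coded syndrome streams (the paper's identity $[T_1^n\cdots T_L^n]H^t=GB[V_1^k\cdots V_L^k]$), and coset-constancy of the $f_i$ together with Claim~\ref{thm:2} completes the argument exactly as in Section~\ref{section:proof}. One minor bookkeeping slip: after averaging, each syndrome coordinate is carried by $n$ (not $n\lfloor N/r\rfloor$) uses of the reduced-variance capacity-$c$ channel — counting $n\lfloor N/r\rfloor$ uses at that capacity would double-count the repetition gain — but your final rate $k/(nN)\to c/(N\mathfrak{R})$ is the correct one and matches the paper.
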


\subsection{Comparison with the computational capacity of \cite{NazerGastpar}}
\label{Section:NazerGastpar}
In this section we discuss how our lower bound extends the result of Nazer and Gastpar in \cite{NazerGastpar}. We find the set of functions where we can use the result of Nazer and Gastpar, and the lower bound it gives us.

Our formulation above is similar to the one given by Nazer and Gastpar \cite{NazerGastpar}, except that we have a signature matrix here. Nonetheless, if we fix the signature matrices, we can think of a virtual channel between the encoder and decoders that includes the signature matrix. The input to this virtual channel is $(T_1, T_2, ..., T_L)$ and the output is $Y(1:N)=\sum_{i=1}^LT_i\textbf{s}_i(1:N)+Z(1:N)$ where the noise vector $Z(1:N)$ has covariance matrix $\sigma^2I$. If we use the virtual channel $n$ times, we get an output vector of size $nN$ that we were denoting by $Y^{nN}$. \par In this case we can write down the lower bound given in \cite{NazerGastpar} when we have a linear function over a field. We are mainly concerned with functions with binary inputs. The only linear function on the field $\mathbb{F}_2$ is the $XOR$ function. So this already puts limitations on the lower bounds we can get by \cite{NazerGastpar}. When $U_i$ (for $1\leq i\leq b$) is the XOR of a subset of the observations $O_1, ..., O_L$, we get the following lower bound
\small $$\frac{I(\oplus_{i=1}^LT_i;Y(1:N))}{NH(U_1, U_2, \dots U_b)},$$\normalsize
where the factor $N$ in the denominator comes from our definition of rate. Because we are free to choose the signatures $\textbf{s}_1$, ..., $\textbf{s}_L$ we can take maximum of the above expression over all $\textbf{s}_1$, ..., $\textbf{s}_L$.
\small{$$\max_{\textbf{s}_1,\cdots, \textbf{s}_L}\frac{I(\oplus_{i=1}^LT_i;Y(1:N))}{NH(U_1, U_2, \dots U_b)}.$$}\normalsize
The above result works only when the $U_i$s are the $XOR$ functions of subsets of $O_1, ..., O_L$, and it involves a maximization problem that we found hard to do, even when we have linear functions on a field.

\begin{figure}
\centering
\includegraphics[width=57mm]{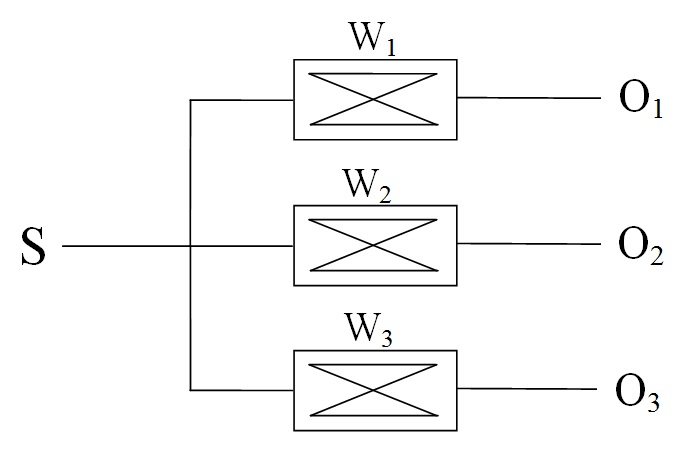}
\vspace{+0.43cm}
\caption{A model used for simulations. The observations $O_1, \cdots, O_3$ are assumed to be the result of a binary source $S$ passed through three independent BSC channels.}\label{fig:Model}
\end{figure}
\begin{figure}
\centering
\includegraphics[width=57mm]{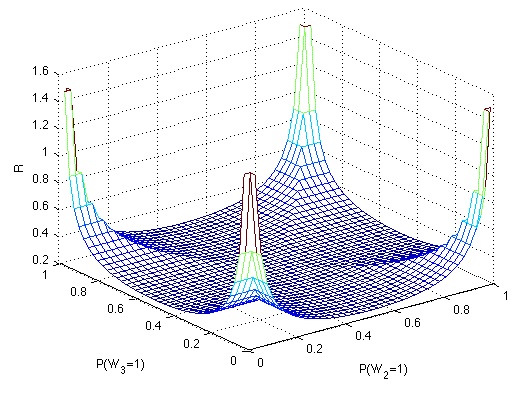}
\caption{The lower bound on the computational capacity for $O_1$, $O_2$ and $O_3$ of Fig. \ref{fig:Model}. The plot is in terms of $p(W_2 = 1)$ and $p(W_3 = 1)$ when
$p(W_1 = 1) = 0$. }\label{fig:Sim}
\vspace{-0.5cm}
\end{figure}

To compute arbitrary nonlinear boolean functions of the observations, Nazer and Gastpar suggest that we increase the field size and embed the non-linear function in a linear function defined on a larger space (see Theorem 2 of \cite{NazerGastpar}). Although this would not solve the maximization problem over the signatures $\textbf{s}_1$, ..., $\textbf{s}_L$ mentioned above, it will result in a lower bound for non-linear functions. In this paper we take an alternative approach of using several linear functions in the \emph{same} field using a \emph{particular construction} (rather than one single linear function over a larger field). In order to transmit several functions over a channel, \cite{NazerGastpar} uses a successive Slepian-Wolf type scheme. Our model allows us to do better than this. Through an appropriate choice of the signature matrix, we can run part of the transmission of the functions in parallel, getting an extra gain compared to the scheme considered by \cite{NazerGastpar}.

\section{Simulation}
Consider the boolean function $$f(O_1, O_2, O_3)=\overline{O_1}O_2O_3+O_1\overline{O_2}\overline{O_3} \mod~2.$$ This is not a linear function in the field $\mathbb{F}_2$. Let us assume that $N=2$. We can use the signature matrix given in equation \ref{eqn:matrixH} since $f$ is constant over all of its cosets. We have $V_1=O_1+O_2 (\mbox{mod }2)$ and $V_2=O_1+O_3 (\mbox{mod }2)$. Therefore
\begin{align*}&H(V_1)=h(p(O_1=O_2)),&\mbox{and}\\&H(V_2)=h(p(O_1=O_3)),&\end{align*}
where $h(\cdot)$ is the binary entropy function. The main theorem implies the following lower bound.
$$R=\frac{c}{2\cdot \max(h(p(O_1=O_2)), h(p(O_1=O_3)))}.$$
The value of $c\leq 1$ depends on $\sigma$. For the sake of illustration we assume that $\sigma$ is such that $c=0.5$.

Note that $R=\infty$ when \begin{align*}&p(O_1=O_2)\in\{0,1\},&~~~~\mbox{ and}\\&p(O_1=O_3)\in\{0,1\}.&\end{align*}
This is expected since in each of the four cases $f(O_1, O_2, O_3)$ is a constant. It would be interesting to understand the behavior of the lower bound when $p(O_1=O_2)$ and $p(O_1=O_3)$ are not exactly $\{0,1\}$, but rather in its vicinity. To study this, let us consider the model depicted in Fig. \ref{fig:Model} in which $O_1$, $O_2$ and $O_3$ are assumed to be the result of a random variable $B$ passing through three independent BSC channels, i.e.
\begin{align*}&O_1=S+W_1, O_2=S+W_2, O_3=S+W_3 \mod~2,\end{align*}
where $W_i$'s are binary random variables. $p(W_i=1)$ is the crossover probability of the $i^{th}$ channel. When $p(W_i=1)\in \{0,1\}$, the lower bound is $\infty$. Fig. \ref{fig:Sim} plots the lower bound $R$ in terms of $p(W_2=1)$ and $p(W_3=1)$ when $p(W_1=1)=0$.

\section{Proofs}
\label{section:proof}
\begin{proof}[Proof of Theorem \ref{thm:3}] We create the signature matrix of the CDMA by repeating the matrix $H$ to get a matrix of size $N\times L$. This means that each of the rows $h_1$, $h_2$, ..., $h_r$ would be repeated $\lfloor\frac{N}{r}\rfloor$ times; extra zeros are padded if $\frac{N}{r}$ is not an integer. At the receiver, we can look at the received $Y$'s corresponding to each of the $\lfloor\frac{N}{r}\rfloor$ repetitions and take their average. This would reduce the variance of noise for that transmission to $\sigma'^2=\frac{\sigma^2}{\lfloor\frac{N}{r}\rfloor}$. So, this would be as if the signature matrix is of size $r$ (instead of $N$) identical to $H$, and the noise variance is $\sigma'^2$ (instead of $\sigma^2$). We are going to continue assuming that the signature matrix and the parity check matrix are both $H$.

At time $i$, the cell-phones are sending $T_1[i], T_2[i], ..., T_L[i]$ respectively. The receiver gets $H\big[T_1[i], T_2[i], ..., T_L[i]\big]^t$ plus noise where the matrix multiplication here is in $\mathbb{R}$. To convert the matrix multiplication from $\mathbb{R}$ to that in $\mathbb{F}_2$, the receiver computes the modulo $2$ of each received number (as discussed in the statement of the theorem), mapping it to the interval $[-\frac{1}{2},\frac{3}{2})$. This would be as if $H\big[T_1[i], T_2[i], ..., T_L[i]\big]^t$ (matrix multiplication in $\mathbb{F}_2$) is transmitted but the noise added to this is no longer Gaussian; it is a Gaussian noise mod $2$. Number $c$ in the statement of the theorem is the capacity of this channel.

Having described the signature matrix, and the decoder's mod $2$ postprocessing of the signal, we now turn our attention to the encoders and the decoder. We can divide the rest of the proof into two parts. The first part is a general statement about recovery of the desired functions of the observations from $V_i$'s. This is used in the second part of the proof to design the encoders and the decoder.

\noindent(I) We first claim that given any values for $(o_1, o_2, \dots, o_L)$, knowing the values of $\sum_{j=1:L}h_i[j]o_j$ modulo two for $i\in[1:r]$ is sufficient to perfectly recover $f_i(o_1, o_2, \dots, o_L)$ ($1\leq i\leq b$). To see this note that having $r$ equations $\sum_{j=1:L}h_i[j]o_j$ (modulo two) for $i\in[1:r]$ is equivalent to having the product $H[o_1, o_2, \dots, o_L]^t$ in the matrix form; here the multiplication is in $\mathbb{F}_2$. Note that the number of equations is $r$ whereas the number of free variables is $L$, so it first seems that the decoder may not be able to figure out $[o_1, o_2, \dots, o_L]$. The decoder can list the set of all $[o'_1, o'_2, \dots, o'_L]$ such that $H[o'_1, o'_2, \dots, o'_L]^t$ (modulo two) is equal to the received $H[o_1, o_2, \dots, o_L]^t$ (modulo two). This would be the coset associated to $[o_1, o_2, \dots, o_L]$ for the parity check matrix $H$. Because $f_i$ maps all the sequences in a coset into the same number, namely $f_i(o'_1, o'_2, \dots, o'_L)$ are all equal, the receiver will be able to exactly recover $f_i(o_1, o_2, \dots, o_L)$.

\noindent(II) From the first part of the proof we can conclude that if we can reliably communicate i.i.d.\ copies of $V_i=\sum_{j=1:L}h_i[j]O_j$ (modulo two) to the receiver, it will be able to reliably recover i.i.d.\ copies of $f_i(O_1, O_2, \dots, O_L)$ ($1\leq i\leq b$). Therefore we have translated the original problem into that of communicating linear functions. If we think of the signature matrix as part of a virtual channel between the encoder and decoders, this virtual channel will be a set linear MACs (as defined by \cite{NazerGastpar}) in parallel because of the postprocessing at the receiver. Therefore our setting is not a special case of one considered by Theorem 1 of \cite{NazerGastpar} because the channel is not a \emph{single} linear MAC. Nonetheless we borrow ideas from \cite{NazerGastpar} to extend the proof of Theorem 1 of \cite{NazerGastpar}; this is not difficult given that the structure of the virtual channel and the linear functions to be computed (i.e. $V_i$s) are prepared to ``match".

It is possible to find a binary matrix $B$ of size $(k\mathfrak{R}+\epsilon)\times k$ for the i.i.d.\ copies of $(V_1,V_2,...,V_r)$ such one can recover i.i.d.\ copies of $V_1, V_2, \dots V_L$, namely $V_1^k, V_2^k, \dots, V_L^k$, from $B[V_1^k~V_2^k~\cdots~V_L^k]$.
%
within a probability of error $\epsilon$, where by $V_1^k$ we mean a column vector consisting of the $k$ i.i.d.\ copies of $V_1$. The multiplication between the column vector $V_i^k$ and $B$ is done in $\mathbb{F}_2$. Next we find a channel coding matrix $G$ of size $\frac{k\mathfrak{R}+\epsilon}{c-\epsilon}\times (k\mathfrak{R}+\epsilon)$ for communicating over a Gaussian channel with variance $\sigma'^2$. The $i^{th}$ cell-phone computes $GBO_i^k$. It sets this vector of size $n=\frac{k\mathfrak{R}+\epsilon}{c-\epsilon}$ to be $T_i^n$. At time $j$, the random variable $T_i[j]$ is multiplied by signature $\textbf{s}_i$. The receiver gets $\sum_jT_i[j]\textbf{s}_i$ plus a noise vector. This is equivalent with getting\small
\begin{align*}[T_1^n~T_2^n\cdots~T_L^n]H^t&=GB[O_1^k~O_2^k~\cdots~O_L^k]H^t\\&=GB[V_1^k~V_2^k~\cdots~V_L^k],\end{align*}\normalsize
%
plus noise. Since $G$ is a channel coding matrix, we can recover $B[V_1^k~V_2^k~\cdots~V_L^k]$ with high probability. From here we can recover $V_i^k$ because of the property of $B$ mentioned above. Thus, we have a good code. The rate of this code is \begin{eqnarray*}&\frac{k}{Nn}=\frac{k}{N\frac{k\mathfrak{R}+\epsilon}{c-\epsilon}}=
\frac{c-\epsilon}{N(\mathfrak{R}+\frac{\epsilon}{k})}.\end{eqnarray*}
Letting $\epsilon$ converge to zero, we get the desired result.
\end{proof}

\end{document}